\DeclareRobustCommand{\lintprod}{%
  \mathbin{\mathpalette\int@prod{(0,0)(0.8,0)(0.8,0.6)}}%
}
\DeclareRobustCommand{\rintprod}{%
  \mathbin{\mathpalette\int@prod{(0.1,0.6)(0.1,0)(0.9,0)}}}
\newcommand{\int@prod}[2]{%
  \begingroup
  \sbox\z@{$\m@th#1+$}%
  \setlength\unitlength{\wd\z@}%
  \linethickness{0.09\unitlength}%
  \begin{picture}(1,1)
  \roundcap
  \polyline#2
  \end{picture}%
  \endgroup
}
\newcommand{\Rb}{{\mathbf R}}
\newcommand{\ebf}{{\mathbf e}}
\newcommand{\xb}{{\mathbf x}}
\newcommand{\ubf}{{\mathbf u}}
\DeclareDocumentCommand \spinori { o } {%
  \IfNoValueTF {#1} {%
    \sigma%
  }{%
    \sigma_{#1}%
  }%
}
\DeclareDocumentCommand \indG { o o } {%
  \IfNoValueTF {#1} {%
    \iota%
  }{%
    \IfNoValueTF {#2} {%
      \iota_{#1}%
    }{%
      \iota_{#1}(#2)%
    }
  }%
}
\DeclareDocumentCommand \indGc { o o } {%
  \IfNoValueTF {#1} {%
    \bar\iota%
  }{%
    \IfNoValueTF {#2} {%
      \bar\iota_{#1}%
    }{%
      \bar\iota_{#1}(#2)%
    }
  }%
}
\DeclareDocumentCommand \indGinv { o o } {%
  \IfNoValueTF {#1} {%
    \bar\iota%
  }{%
    \IfNoValueTF {#2} {%
      \iota^{-1}_{#1}%
    }{%
      \iota^{-1}_{#1}(#2)%
    }
  }%
}
\DeclareDocumentCommand \coefG { o o } {%
  \IfNoValueTF {#1} {%
    \gamma%
  }{%
    \IfNoValueTF {#2} {%
      \gamma_{#1}%
    }{%
      \gamma_{#1}(#2)%
    }
  }%
}
\newcommand{\abf}{\mathbf{a}}
\newcommand{\bbf}{\mathbf{b}}
\DeclareDocumentCommand \act { o } {%
  \IfNoValueTF {#1} {%
    \mathcal S%
  }{%
    \mathcal S_{\text{#1}}%
  }%
}
\DeclareDocumentCommand \ld { o } {%
  \IfNoValueTF {#1} {%
    \mathcal L%
  }{%
    \mathcal L_{\text{#1}}%
  }%
}
\newcommand{\xbpert}{\pmb{\varepsilon}}
\newcommand{\Gb}{\mathbf{G}}
\def\trMs{\@ifnextchar[{\@with}{\@without}}
\def\@with[#1]{\Gb_{\xbpert}^{#1}}
\def\@without{\Gb_{\xbpert}^s}
\DeclareMathOperator{\gr}{gr}
\newcommand{\len}[1]{\lvert#1\rvert}
\newcommand{\hodge}{{\scriptscriptstyle\mathcal{H}}}
\newcommand{\hodgeinv}{{\scriptscriptstyle\mathcal{H}^{-1}}}
\newtheoremstyle{question}
  {\topsep}   
  {\topsep}   
  {\upshape}  
  {0pt}       
  {\itshape}  
  {.}         
  {5pt plus 1pt minus 1pt} 
  {\thmname{#1} \thesection.\thmnumber{\itshape#2}\thmnote{(#3)}} 
    \def\@endtheorem{\hfill$\P$\endtrivlist\@endpefalse }
\theoremstyle{question}
\let\save@mathaccent\mathaccent
\newcommand*\if@single[3]{%
  \setbox0\hbox{${\mathaccent"0362{#1}}^H$}%
  \setbox2\hbox{${\mathaccent"0362{\kern0pt#1}}^H$}%
  \ifdim\ht0=\ht2 #3\else #2\fi
  }
\newcommand*\rel@kern[1]{\kern#1\dimexpr\macc@kerna}
\newcommand*\widebar[1]{\@ifnextchar^{{\wide@bar{#1}{0}}}{\wide@bar{#1}{1}}}
\newcommand*\wide@bar[2]{\if@single{#1}{\wide@bar@{#1}{#2}{1}}{\wide@bar@{#1}{#2}{2}}}
\newcommand*\wide@bar@[3]{%
  \begingroup
  \def\mathaccent##1##2{%
    \let\mathaccent\save@mathaccent
    \if#32 \let\macc@nucleus\first@char \fi
    \setbox\z@\hbox{$\macc@style{\macc@nucleus}_{}$}%
    \setbox\tw@\hbox{$\macc@style{\macc@nucleus}{}_{}$}%
    \dimen@\wd\tw@
    \advance\dimen@-\wd\z@
    \divide\dimen@ 3
    \@tempdima\wd\tw@
    \advance\@tempdima-\scriptspace
    \divide\@tempdima 10
    \advance\dimen@-\@tempdima
    \ifdim\dimen@>\z@ \dimen@0pt\fi
    \rel@kern{0.6}\kern-\dimen@
    \if#31
      \overline{\rel@kern{-0.6}\kern\dimen@\macc@nucleus\rel@kern{0.4}\kern\dimen@}%
      \advance\dimen@0.4\dimexpr\macc@kerna
      \let\final@kern#2%
      \ifdim\dimen@<\z@ \let\final@kern1\fi
      \if\final@kern1 \kern-\dimen@\fi
    \else
      \overline{\rel@kern{-0.6}\kern\dimen@#1}%
    \fi
  }%
  \macc@depth\@ne
  \let\math@bgroup\@empty \let\math@egroup\macc@set@skewchar
  \mathsurround\z@ \frozen@everymath{\mathgroup\macc@group\relax}%
  \macc@set@skewchar\relax
  \let\mathaccentV\macc@nested@a
  \if#31
    \macc@nested@a\relax111{#1}%
  \else
    \def\gobble@till@marker##1\endmarker{}%
    \futurelet\first@char\gobble@till@marker#1\endmarker
    \ifcat\noexpand\first@char A\else
      \def\first@char{}%
    \fi
    \macc@nested@a\relax111{\first@char}%
  \fi
  \endgroup
}
\newcounter{aside}
   {\refstepcounter{aside}
   \begin{adjustwidth}{\parindent}{-1.5\parindent}
	\small \textbf{Aside \thechapter.\theaside~#1}
	}
{\par\end{adjustwidth}}
\newcommand{\R}{{\mathbb R}}
\newcommand{\Iset}{{\mathcal{I}}}
\newcommand{\qtf}{{\textsc{Q}}}
\newcommand{\iq}{{\text{i}}}
\newcommand{\jq}{{\text{j}}}
\newcommand{\kq}{{\text{k}}}
\newcommand{\ptf}{{\textsc{P}}}
\newcommand{\qtfa}{{\textsc{A}}}
\newcommand{\qtfb}{{\textsc{B}}}
\newcommand{\qtfr}{{\textsc{R}}}
\newcommand{\Qset}{{\mathcal{Q}}}
\newcommand{\wedgevec}{{\wedge_{\text{vec}} }}
\newcommand{\wedgequat}{{\wedge_{\Qset} }}
\newcommand{\pbf}{\mathbf{p}}
\newcommand{\qbf}{\mathbf{q}}
\newcommand{\fieldA}{\abf}
\newcommand{\fieldAi}{a}
\newcommand{\eg}{{e.g. }}
 \theoremstyle{theorem}
\newtheorem{proposition}{Proposition}
 \theoremstyle{theorem}
 \newtheorem*{remark}{Remark}
 \numberwithin{equation}{section}
 \theoremstyle{definition}
\newtheorem{property}{Property}[section]
 \theoremstyle{definition}
\newtheorem*{proof}{Proof}
\title{{Time-like definition of quaternions in 
exterior algebra}}
\author{Ivano Colombaro}
\address{Faculty of Engineering, Free University of Bozen-Bolzano
\\ {\small Piazza Universit\`{a} 5, 39100 Bolzano, ITALY.}	}
		\email{ivano.colombaro@unibz.it}
\thanks{Published in: \textbf{\textit{Ric. di Mat.} (2023). DOI:} \href{https://doi.org/10.1007/s11587-023-00810-z}{10.1007/s11587-023-00810-z}.}
\keywords{Quaternions, Exterior Algebra, Exterior Calculus, Rotations}
\subjclass[2010]{20G20, 15A75}
\date{\today}
\begin{document}


\maketitle

\begin{abstract}
A formal description of quaternions by means of exterior calculus is presented. Considering a three-dimensional space-time characterized by three time-like coordinates, we have been able to consistently recover a suitable formulation of quaternions by means of the properties arising from exterior algebra and calculus.
As an application, it is also illustrated how rotations may be written in terms of quaternions, in accordance with definition provided in exterior algebra.
\end{abstract}



\section{Introduction}
Space-time exterior calculus serves as a valuable instrument that allows one to develop theories in mathematical physics involving general-graded multivectors in a framework of arbitrary time and space dimensions~\cite{colombaro2019introductionSpaceTimeExteriorCalculus}.
A generalized theory of electromagnetism, for instance, has been formulated through the properties of exterior calculus in~\cite{colombaro2020generalized}.
The introduction of exterior-algebraic variational methods permits one to find the dynamical equations of a system as the Euler-Lagrange equations~\cite{colombaro2021EulerLagrange} and to discover the conserved quantities such as the equivalent stress-energy-momentum tensor and the angular momentum.
In particular, for a suitably-defined Lagrangian density depending on multivector fields, it is possible to find a manifestly-symmetric form for the stress-energy-momentum tensor~\cite{martinez2021tensor, colombaro2021tensor-conf}, imposing the invariance of the action to infinitesimal space-time translations, while the invariance with respect to rotations allows one to recover the conservation law for the equivalent angular momentum~\cite{martinez2021angularmomentum}.

In this paper, we are going to define a suitable basis for quaternions in the context of exterior algebra, by considering a space-time characterized by three time dimensions.
Quaternions are mathematical objects first described in 1843 by W. R. Hamilton~\cite{hamilton1853LectureNotes}, who devised them and their operations to deal with three-dimensional problems in mechanics.
Later on, Clifford assimilated quaternions in geometric algebra, nowadays known as Clifford algebra~\cite{PerezGracia2020}. Indeed, quaternions include vector calculus in three dimensions and its characteristic operations of scalar and vector product and, historically, the algebraic description of quaternions might be considered as the origin of non-commutative algebra~\cite{Hazewinkel2004algebras}. 
Hamilton real quaternions, in particular, are the result of researching a way to extend the complex number to a higher-dimensional system~\cite{hamilton1844onquaternions}. 
Algebraically, quaternions are an example of a non-commutative division ring. Specifically, they satisfy the properties of a ring, as they form a closed set under addition and multiplication, as both addition and multiplication are associative and distributive, and admit an identity element. However, they do not satisfy the property of commutativity for multiplication, and for this reason quaternions identify a division ring but not a field.

In this work, we provide the definition and characterization of quaternions by means of the formalism of exterior calculus, presenting how the main operations appear in terms of the tools implemented in exterior algebra.
A different approach would consider Clifford algebra from the point of view of differential forms, in order to describe quaternions and the related quantities~\cite{salingaros19831}. However, for the purposes of this article, we are not explicitly adopting this latter approach, since a match is possible between differential forms and exterior calculus~\cite{colombaro2020generalized, colombaro2021EulerLagrange}.

Hereinafter, in Sect.~\ref{sec:EA} we briefly review the main concepts of the general $(k,n)$-dimensional construction of exterior algebra, suggesting some references for readers interested to deepen the topic. For the scope of this article, we will then limit our dimensions to three time-like coordinates.
Sect.~\ref{sec:quat-ea} is the core of the article where quaternions in exterior algebra are introduced and described.
In particular, in Sect.~\ref{sec:def-prob} we present the setup of the problem in the $(3,0)$ space-time and in Sect.~\ref{subsec:extprod+} we provide some necessary tools from exterior calculus which play a key role for the rest of the theory.
We deduce the exterior-algebraic formulation of quaternions in Sect.~\ref{ssec:tlquat} and, subsequently, we introduce the application to rotations of the exterior-algebraic quaternions in Sect.~\ref{sec:apps-ex}. Ultimately, we conclude the article with some final considerations in Sect.~\ref{sec:concl}.

\section{Basics in Exterior Calculus}\label{sec:EA}

In its most general formulation, exterior calculus takes into consideration a flat space-time $\Rb^{k+n}$ with $k$ temporal dimensions and $n$ spatial dimensions, called $(k,n)$ space-time. In this mathematical framework, we write the canonical basis $\displaystyle{\{\ebf_i\}_{i=0}^{k+n-1}}$, referring to indices from $0$ to $k-1$ as the time coordinates and to the following $n$ indices as the space coordinates.

Next, we proceed to introduce the fundamental elements of exterior algebra that are necessary for the purposes of this article, while readers seeking a more comprehensive understanding may find further exploration of the subject in~\cite{colombaro2019introductionSpaceTimeExteriorCalculus}.

\subsection{Exterior product, multivectors and main properties}\label{subsec:2.1}

A natural extension of the canonical vector basis can be done by means of the exterior (or wedge) product $\wedge$~\cite[p.~2]{winitzki2010}, and we define a grade-$r$ multivector basis  in a $(k,n)$ space-time, where $r\le k+n$, as in~\cite[Eq.~(5)]{colombaro2019introductionSpaceTimeExteriorCalculus},
\begin{equation}\label{eq:mv-basis}
\ebf_I = \ebf_{i_1} \wedge \ebf_{i_2} \wedge \dots \wedge \ebf_{i_r}, 
\end{equation}
where $I = (i_1,\dots,i_r)$ conventionally corresponds to the list of the ordered non-repeated indices $i_1, \dots , i_r$.
Using the basis and the notation adopted in~\eqref{eq:mv-basis}, we define a multivector or grade $r$ as
\begin{equation}\label{eq:multivector}
\fieldA(\xb) = \sum_{I\in\Iset_r} \fieldAi_I(\xb)\ebf_I,
\end{equation}
where $\Iset_r$ is the set of all ordered lists $I$ with non-repeated $r$ indices, for $r=0, 1, \dots, k+n-2, k+n-1$. Henceforth, we refer to the grade of the multivector field also as $\gr(\fieldA)=r$ and the length of the list $I$ is denoted as $\len{I}$ .

Let us now briefly summarize the main operations and properties among multivector fields, defined according to~\eqref{eq:multivector}. First, the dot product
$\cdot$ between two arbitrary grade-$r$ multivector basis $\ebf_I$ and $\ebf_J$ is defined as
\begin{equation}\label{eq:dot_multi}
	\ebf_I\cdot\ebf_J = \Delta_{IJ} = \Delta_{i_1 j_1}\Delta_{i_2 j_2}\dots\Delta_{i_r j_r},
\end{equation}
where $I$ and $J$ are the ordered lists of $r$ indices, $I = (i_1,i_2,\dots,i_r)$ and $J = (j_1,j_2,\dots,j_r)$, and $\Delta_{IJ}$ is the generalized metric tensor, where $\Delta_{ij} = 0$ if $i\neq j$, $\Delta_{ii} = -1$ for $i=0,\dots, k-1$ and $\Delta_{ii} = +1$, for $i=k, \dots, k+n-1$. 

We have introduced the exterior product between vector basis in~\eqref{eq:mv-basis} and we now define the same operation between two multivector basis $\ebf_I$ and $\ebf_J$, having grades respectively $r = \len{I}$ and $r' = \len{J}$, namely
\begin{equation} \label{eq:ext-prod-def}
	\ebf_I\wedge\ebf_J = \sigma(I,J)\ebf_{I+J},
\end{equation}
where $\sigma(I,J)$ is the signature of the permutation sorting the elements of this concatenated list of $\len{I}+\len{J}$ indices, and $I+J$ represents the resulting sorted list. 

We then proceed to describe two generalizations of the dot product, namely the left and right interior products. Let us consider two basis multivectors $\ebf_I$ and $\ebf_J$, such that $I$ is a subset of $J$, thus $\len{I}\le \len{J}$ and $I\subseteq J$. Then the left interior product, denoted as $\lintprod$, is defined as
\begin{equation}	\label{eq:left-int-prod}
	\ebf_I \lintprod \ebf_J = \Delta_{II}\sigma(J\setminus I,I)\ebf_{J\setminus I},
\end{equation}
and the right interior product, denoted as $\rintprod$, is defined as 
\begin{equation}\label{eq:right-int-prod}
	\ebf_J \rintprod \ebf_I = \Delta_{II}\sigma(I,J\setminus I)\ebf_{J\setminus I}, 
\end{equation}
where the vector $\ebf_{J\setminus I}$ has grade $\len{J}-\len{I}$ and it is composed by the elements of $J$ not in common with $I$. Left and right interior products are connected to each other by means of the relation~\cite{colombaro2019introductionSpaceTimeExteriorCalculus}
\begin{equation}
\ebf_I \lintprod \ebf_J = \ebf_J \rintprod \ebf_I (-1)^{\len{I}(\len{J} - \len{I})},
\end{equation}
and both these operations coincide with the dot product in~\eqref{eq:dot_multi} in the case that the two lists have the same grade $\len{I}= \len{J}$, namely 
\begin{equation}
\ebf_I \lintprod \ebf_J = \ebf_I \rintprod \ebf_J = \ebf_I \cdot \ebf_J , \qquad \len{I}= \len{J}.
\end{equation}
Furthermore, for a grade-$r$ multivector basis $\ebf_I$, we may denote what we call as the Grassmann or Hodge complement~\cite{Frankel2012geomphys} and we write it~\cite[Sect.~2.1]{colombaro2019introductionSpaceTimeExteriorCalculus}
\begin{equation} \label{eq:hodge-transf}
	\ebf_I^\hodge = \Delta_{I,I}\sigma(I,I^c)\ebf_{I^c},
\end{equation}
where $I^c$ is the complement list composed of all the labels but the elements of $I$ and identifying the grade-$(k+n-r)$ dual multivector.
Similarly, we express the inverse Hodge transformation as
\begin{equation} \label{eq:hodge-inv-transf}
	\ebf_I^\hodgeinv = \Delta_{I^c,I^c}\sigma(I^c,I)\ebf_{I^c} .
\end{equation}

\section{Quaternions in Exterior Algebra}\label{sec:quat-ea}

\subsection{Setup of the problem} \label{sec:def-prob}

Let us consider a space-time composed of three time-like coordinates. Conventionally, we name these coordinates $\ebf_\iq$, $\ebf_\jq$ and $\ebf_\kq$ and, as a consequence, the generalized metric tensor admits non-vanishing terms $\Delta_{\iq\iq}=\Delta_{\jq\jq}=\Delta_{\kq\kq}=-1$.
In this framework, from~\eqref{eq:hodge-transf} and~\eqref{eq:hodge-inv-transf} we can easily recover the following expressions for the Hodge and inverse Hodge dual of vectors, respectively
\begin{gather}
\ebf_\iq^\hodge = - \sigma(\iq, \jq\kq) \ebf_{\jq\kq}
\\
\ebf_\iq^\hodgeinv = \sigma(\jq\kq,\iq) \ebf_{\jq\kq}	,\label{eq:vec-invhodge}
\end{gather}
and the same relations for bivectors, specifically
\begin{gather}
\ebf_{\iq\jq}^\hodge = \sigma(\iq\jq,\kq)\ebf_\kq \label{eq:bivec-hodge}
\\
\ebf_{\iq\jq}^\hodgeinv = - \sigma(\kq,\iq\jq)\ebf_\kq , \label{eq:bivec->vec}
\end{gather}
pointing out that $\ebf_\iq^\hodgeinv =- \ebf_\iq^\hodge $ and $\ebf_{\iq\jq}^\hodgeinv = -\ebf_{\iq\jq}^\hodge$,
due to the nature of the $(3,0)$-dimensional space-time.

We then define entities denoted as $\qtf$, constructed from four real numbers, $a$, $b$, $c$ and $d$. One of them, generally $a$, is multiplied by a scalar basis, which might be specifically written as $\ebf_\emptyset$ but it is omitted without any confusion. The three remaining numbers $b$, $c$ and $d$, are instead multiplied by the three basis elements of our $(3,0)$ space-time. Hence, we write $\qtf$ as
\begin{equation}\label{eq:quat-gen-def}
\qtf = a + b\ebf_\iq + c\ebf_\jq + d\ebf_\kq ,
\end{equation}
and, hereinafter, we will name the full set of objects characterized in~\eqref{eq:quat-gen-def} with the symbol $\Qset$ and we may simply refer by the element $\qtf$ as $\qtf \in \Qset$.

\subsection{Exterior product and its variants}\label{subsec:extprod+}
Exterior product, also known as wedge product, may increase the degree of the operation we are computing~\cite{colombaro2019introductionSpaceTimeExteriorCalculus}, as we have noticed in~\eqref{eq:mv-basis} and~\eqref{eq:ext-prod-def}.

\begin{property}\label{prop:biv}
We assume that the basis elements we are taking into consideration do not exceed grade one. Thus, in case we find bivectors, we write them by means of their Hodge complement, using~\eqref{eq:bivec->vec}.
In this way, objects as outlined in~\eqref{eq:quat-gen-def} define a closed group, since the basic operation of exterior product among elements like~\eqref{eq:quat-gen-def} generate objects with the same structure, as will be further explored in the subsequent sections of this paper.
Bivectors appear as
\begin{equation}
\ebf_{pq} =\ebf_p \wedgevec \ebf_q = \sigma(pq,r) \ebf_r , \label{eq:bivec-vec}
\end{equation}
where $p,\, q ,\, r = \iq , \jq, \kq$ assume the values of the three time-like coordinates of our space-time. In these specific cases, we are using the symbol $\wedgevec$, in order to identify this particular operation and to avoid confusion. 
\end{property}

Property~\ref{prop:biv} has the effect of identifying the two-dimensional surface described by a bivector by means of its orientation in the three dimensional space that we are considering.
In fact, in our $(3,0)$ space-time as in $\R^3$, there is an equivalence between the operations of wedge product and cross product~\cite{Lounesto2001clifford}, which we identify with the symbol $\wedgevec$ in order to avoid confusion.
Thus, the product $\wedgevec$ has the role of ``vectorizing'' the bivector and, in other words, we identify the wedge product $\wedgevec$ with the standard cross product $\times$ between two vectors in $\R^3$.

To conclude this section, we introduce a generalized product among elements of $\Qset$, represented as $\wedgequat$. Given two objects $\qtf_1, \qtf_2 \in \Qset$, we define
\begin{equation}\label{eq:wedgeQ}
\qtf_1 \wedgequat \qtf_2 = \qtf_1 \cdot \qtf_2 + \qtf_1 \wedgevec \qtf_2 ,
\end{equation}
where the first term on the right-hand-side is the dot product producing the scalar part of the result, while the second term on the right-hand-side determines the vector part by means of the product defined in~\eqref{eq:bivec-vec}.

\subsection{Definition of Time-Like Quaternions}\label{ssec:tlquat}

Under the conditions that we have introduced in the previous section, we can identify $\qtf\in\Qset$ in~\eqref{eq:quat-gen-def} as the exterior-algebraic representation for the quaternions in {Clifford} algebra~\cite[Sect.~1.1]{Vaz2016IntroCliffordAlg}. The configuration presented in~\ref{subsec:2.1} can already be connected with the setup described in~\cite[Sect.~1]{ata2018adifferentpolar}, where the negative (resp.~positive) signature is related to time-like (resp.~space-like) coordinates. 
Thus, we find a time-like representation of quaternions in exterior calculus, which is also equivalent to the Clifford-algebraic time-like representation in Minkowski 3-space~\cite{inoguchi1998timelikesurfaces, ozdemir2006rotations}.

In the first instance, employing the product defined at~\eqref{eq:wedgeQ}, we can easily deduce the property 
\begin{equation}
\ebf_\iq \wedgequat \ebf_\iq = \ebf_\jq \wedgequat \ebf_\jq = \ebf_\kq \wedgequat \ebf_\kq  = \ebf_\iq \wedgequat \ebf_\jq \wedgequat \ebf_\kq = -1 ,
\end{equation}
having its correspondence in the classical definition of the algebra of quaternions~\cite{clifford1878applications}.

Furthermore, we can also prove the equivalence of the Hamilton product between two elements, which for us is simply identified with the wedge product between two objects. Let us consider two objects $\ptf, \qtf \in \Qset$, satisfying the properties above mentioned, defined respectively
\begin{gather}
\ptf = p_0 + p_1\ebf_\iq + p_2\ebf_\jq + p_3\ebf_\kq , \label{eqP}
\\
\qtf = q_0 + q_1\ebf_\iq + q_2\ebf_\jq + q_3\ebf_\kq . \label{eqQ}
\end{gather} 
The $\Qset$-product between them is
\begin{equation}\label{eq:hprod}
\ptf \wedgequat \qtf = \alpha + \beta \ebf_\iq + \gamma \ebf_\jq + \delta \ebf_\kq,
\end{equation}
where
$\alpha =  (p_0 q_0 - p_1 q_1 -p_2 q_2 - p_3 q_3)$, $\beta = (p_0 q_1 + q_0 p_1 + p_2 q_3 - p_3 q_2)$, $\gamma= (p_0 q_2 + q_0 p_2 + q_1 p_3 - p_1 q_3)$ and $\delta = (p_0 q_3 + q_0 p_3 + p_1 q_2 - q_1 p_2)$, recovering the standard expression known in literature as Hamilton product~\cite[Sect.~5.4]{Kuipers1999quaternions}.
We may also write the quaternions in~\eqref{eqP} and~\eqref{eqQ} separating explicitly the scalar from the vectorial part, namely
\begin{gather}
\ptf = p_0 + \pbf , \label{eqP2}
\\
\qtf = q_0 + \qbf , \label{eqQ2}
\end{gather} 
where $\pbf=(p_1, p_2, p_3)$ and $\qbf=(q_1, q_2, q_3)$ are three-component vectors in terms of the time-like basis triplet $\{\ebf_\iq, \ebf_\jq, \ebf_\kq\}$.
The Hamilton product in~\eqref{eq:hprod} is then expressed in terms of~\eqref{eqP2} and~\eqref{eqQ2} as
\begin{equation} \label{eq:PhamQ}
\ptf \wedgequat \qtf = p_0 q_0 + \pbf \cdot \qbf + q_0 \pbf + p_0 \qbf + \pbf \wedgevec \qbf ,
\end{equation}
or, recalling that the vectors $\pbf$ and $\qbf$ have a three-dimensional time-like vector basis, we write in an explicit way the scalar product as
\begin{equation}
\ptf \wedgequat \qtf = p_0 q_0 - p_1 q_1 - p_2 q_2 - p_3 q_3 + q_0 \pbf + p_0 \qbf + \pbf \times \qbf ,
\end{equation}
where we have expressed the vector wedge product $\wedgevec$ as the standard cross product, without any confusion.

As for the classical definition of the quaternions, we might also define the conjugate of an object $\qtf\in \Qset$, represented by $\qtf^*$, namely
\begin{equation}\label{eq:ccNOhodge}
\qtf^* = a - b\ebf_\iq - c\ebf_\jq - d\ebf_\kq ,
\end{equation}
and identified by the opposite sign in the vector elements.
\begin{proposition}\label{prop:ccgp}
For $\ptf, \qtf \in \Qset$, it holds that
\begin{equation}\label{eq:ccgp}
(\ptf \wedgequat \qtf)^* = \qtf^* \wedgequat \ptf^* ,
\end{equation}
where the generalized product $\wedgequat$ is defined in~\eqref{eq:wedgeQ} and the conjugate changes the sign of the time-vector elements according to~\eqref{eq:ccNOhodge}.
\end{proposition}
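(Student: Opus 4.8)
The plan is to prove the identity by a direct computation, playing the explicit form of the Hamilton product in~\eqref{eq:PhamQ} against the two elementary symmetry properties of the products that build $\wedgequat$. No case analysis or auxiliary lemma seems necessary: everything reduces to the fact that the scalar product is symmetric while the vector wedge is antisymmetric, together with the observation that conjugation acts only on the vector grade.

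First I would read off from~\eqref{eq:PhamQ} that $\ptf \wedgequat \qtf$ splits into a scalar part $S = p_0 q_0 + \pbf \cdot \qbf$ and a vector part $\vbf = q_0 \pbf + p_0 \qbf + \pbf \wedgevec \qbf$. Since the conjugation defined in~\eqref{eq:ccNOhodge} fixes the scalar part and negates the vector part, this gives at once $(\ptf \wedgequat \qtf)^* = S - \vbf$. Next I would compute the right-hand side by substituting $\ptf^* = p_0 - \pbf$ and $\qtf^* = q_0 - \qbf$, in the notation of~\eqref{eqP2}--\eqref{eqQ2}, into~\eqref{eq:PhamQ} with the two factors exchanged, obtaining $\qtf^* \wedgequat \ptf^* = p_0 q_0 + (-\qbf)\cdot(-\pbf) + p_0(-\qbf) + q_0(-\pbf) + (-\qbf)\wedgevec(-\pbf)$. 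The key step is then to invoke two facts: the scalar product is symmetric, so $(-\qbf)\cdot(-\pbf) = \qbf\cdot\pbf = \pbf\cdot\qbf$, which follows from the symmetry of the metric tensor $\Delta_{IJ}$ in~\eqref{eq:dot_multi}; and the vector wedge is antisymmetric, so $(-\qbf)\wedgevec(-\pbf) = \qbf\wedgevec\pbf = -\,\pbf\wedgevec\qbf$, which follows from~\eqref{eq:bivec-vec} because interchanging the two indices reverses the sign of the signature $\sigma(pq,r)$. Substituting these yields $\qtf^* \wedgequat \ptf^* = p_0 q_0 + \pbf\cdot\qbf - p_0\qbf - q_0\pbf - \pbf\wedgevec\qbf = S - \vbf$, which coincides with $(\ptf \wedgequat \qtf)^*$ and closes the argument.

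I do not expect a genuine obstacle here; the result is a routine verification whose only delicate point is sign bookkeeping, and the structural content is carried entirely by the interplay between the commuting scalar product and the anticommuting vector wedge, together with conjugation flipping precisely the vector grade. An alternative, more pedestrian route would be to conjugate the fully expanded coefficients $\alpha,\beta,\gamma,\delta$ of~\eqref{eq:hprod} and match them against the coefficients obtained by expanding $\qtf^* \wedgequat \ptf^*$ componentwise; this leads to the same conclusion but obscures the structural reason for the identity, so I would favour the coordinate-free version above.
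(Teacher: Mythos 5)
Your proposal is correct and follows essentially the same route as the paper's own proof: both decompose $\ptf$ and $\qtf$ into scalar plus vector parts as in~\eqref{eqP2}--\eqref{eqQ2}, expand both sides of~\eqref{eq:ccgp} via~\eqref{eq:PhamQ}, and reduce the identity to the symmetry of the dot product and the antisymmetry $\qbf \wedgevec \pbf = -\,\pbf \wedgevec \qbf$. Your phrasing makes the structural reason slightly more explicit, but the computation is the same.
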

\begin{proof}
In order to prove the relation in~\eqref{eq:ccgp}, 
we consider $\ptf$ and $\qtf$ as written in~\eqref{eqP2} and~\eqref{eqQ2}, respectively. On the left hand side of~\eqref{eq:ccgp} we recognize the conjugate of the expression in~\eqref{eq:PhamQ}, namely
\begin{equation}
(\ptf \wedgequat \qtf)^* = p_0 q_0 + \pbf \cdot \qbf - q_0 \pbf - p_0 \qbf - \pbf \times \qbf ,
\end{equation}
by simply changing the sign of the vector terms.
On the right hand side, we have instead to evaluate 
\begin{equation}
\qtf^* \wedgequat \ptf^* = (q_0 - \qbf) \wedgequat (p_0 - \pbf) ,
\end{equation}
which results, after a few computations, 
\begin{equation}
\qtf^* \wedgequat \ptf^* = p_0 q_0 + \pbf \cdot \qbf - q_0 \pbf - p_0 \qbf - \qbf \times \pbf.
\end{equation}
Since in the last term holds that $- \qbf \times \pbf = \pbf \times \qbf$, then Proposition~\ref{prop:ccgp} is proved.
\end{proof}

We are also able to write the squared norm of $\qtf$ in accordance with the standard definition
\begin{equation}\label{eq:Qnorm}
\vert \qtf \vert^2 
= \qtf\wedgequat\qtf^*  = q_0^2 + q_1^2 + q_2^2 + q_3^2 ,
\end{equation}
and we tipically refer to unit quaternions in the case when the norm is unitary. In addition, we can evaluate the norm of the generalized product in~\eqref{eq:wedgeQ}, which reads
\begin{equation}\label{eq:norm-gen-prod}
\vert \ptf \wedgequat \qtf \vert^2 = \vert \ptf \vert^2 \vert \qtf \vert^2 .
\end{equation}
\begin{proof}
It is possible to demonstrate this latter expression in a few steps. First, we may apply the definition of the squared norm as in~\eqref{eq:Qnorm} and we get
\begin{equation}
\vert \ptf \wedgequat \qtf \vert^2 = (\ptf \wedgequat \qtf ) \wedgequat ( \ptf \wedgequat \qtf )^*.
\end{equation}
Subsequently, using~\eqref{eq:ccgp}, we are able to write 
\begin{equation}
\vert \ptf \wedgequat \qtf \vert^2 = \ptf \wedgequat \qtf \wedgequat \qtf^* \wedgequat \ptf^* = \ptf \wedgequat \ptf^* \vert \qtf \vert^2 = \vert \ptf \vert^2 \vert \qtf \vert^2 ,
\end{equation}
likewise~\eqref{eq:norm-gen-prod}, employing again the definition of the norm in~\eqref{eq:Qnorm}.
\end{proof}

We investigate the inverse of a quaternion $\qtf \in \Qset$, which we are going to name $\qtf^{-1}$ adopting the common notation, satisfying the relations
\begin{equation}\label{eq:inv-def-id}
\qtf^{-1} \wedgequat \qtf = \mathbf{1} =\qtf \wedgequat \qtf^{-1},
\end{equation}
where $\mathbf{1}$ is the identity operator. Taking into account the equality on the left of~\eqref{eq:inv-def-id}, we multiply both terms on the right for the factor $\wedgequat \qtf^{*}$ and we get
\begin{equation}
\qtf^{-1} \wedgequat \qtf \wedgequat \qtf^{*} = \qtf^{*}.
\end{equation}
Recognizing the norm $\vert \qtf \vert ^2 = \qtf \wedgequat \qtf^{*} $, it is then easy to derive
\begin{equation}\label{eq:def-inv-quat}
\qtf^{-1}= \frac{\qtf^{*}}{\vert \qtf \vert ^2 }.
\end{equation}
We could have found the same result by considering the identity on the right of~\eqref{eq:inv-def-id} and multiplying for the factor $\qtf^{*} \wedgequat$ on the left, since $\qtf \wedgequat \qtf^{*} = \qtf^* \wedgequat \qtf$.
\begin{remark}\label{remark:unitrot}
Note that in the case of unit quaternion, namely $\vert \qtf \vert ^2 = 1$, we have that $\qtf^{-1}=\qtf^{*}$, in analogy with rotation matrices.
\end{remark}

\section{Quaternions and Rotations}\label{sec:apps-ex}
One of the most important application concerning quaternions is surely dealing with rotations. In fact, it is particularly remarkable to use quaternions for three-dimensional rotations in order to reduce memory consumption and compute matrices faster, \eg in computer graphics~\cite[Sect.~11-2]{Hearn1997ComputerGraphics}.

In order to represent vector rotations with quaternions, we first need a unit quaternion, which can be written in the form~\cite[Chap.~X]{Brand1947vectorandtensor}
\begin{equation}\label{eq:quaternion-polar}
\qtf = \cos \varphi + \sin \varphi \, \ebf ,
\end{equation}
where the vector part could be made explicit as $\displaystyle \ebf = \frac{q_1 \ebf_\iq + q_2 \ebf_\jq + q_3 \ebf_\kq}{\sqrt{q_1^2+q_2^2+q_3^2}}$.
Second, given two vectors $\abf=(a_1, a_2, a_3)$ and $\bbf=(b_1, b_2, b_3)$, we notice that they can be expressed as \textit{pure quaternions}, which are a particular kind of quaternion with vanishing scalar term, defined by $\qtfa = 0 + \abf$ and $\qtfb = 0 + \bbf$.
Then, the unit quaternion in~\eqref{eq:quaternion-polar} can be written as a quotient of these two latter pure quaternions, namely
\begin{equation}\label{eq:quotientquat}
 \cos \varphi + \sin \varphi \, \ebf  = \qtfb \wedgequat \qtfa^{-1} ,
\end{equation}
where the inverse of $\qtfa$ is defined, in accordance with~\eqref{eq:def-inv-quat}, as
\begin{equation}
\qtfa^{-1} = -\frac{\abf}{\vert \abf \vert^2} .
\end{equation}
To employ the expression $\qtfb \wedgequat \qtfa^{-1}$ in~\eqref{eq:quotientquat}, it is necessary to satisfy a few requirements on the two vectors $\abf$ and $\bbf$: they should have the same length $\vert \abf \vert = \vert \bbf \vert$ and $\varphi$ should be the angle between them. Furthermore, vectors $\abf$ and $\bbf$ have both to be perpendicular to $\ebf$ so that the three $\{\abf, \,\bbf,\,\ebf\}$ identify a right-handed set.
Once satisfying the required premises, it is easy to verify that the right hand side of~\eqref{eq:quotientquat} becomes
\begin{equation}\label{eq:BA-scal-vec}
\qtfb \wedgequat \qtfa^{-1} = \frac{1}{\vert \abf \vert^2} \left( -\abf\cdot\bbf+ \abf\times \bbf\right),
\end{equation}
or, alternatively,
\begin{equation}
\qtfb \wedgequat \qtfa^{-1} = \frac{1}{\vert \abf \vert^2} \left( a_1 b_1 + a_2 b_2 + a_3 b_3 + \abf\times \bbf\right).
\end{equation}
An additional intuitive explanation to~\eqref{eq:quotientquat} might also be deduced from~\eqref{eq:BA-scal-vec}. The scalar part $-\abf\cdot\bbf$ is indeed $\cos\varphi$, due to the condition $\vert \abf \vert = \vert \bbf \vert$ and to the definition of the angle $\varphi$. Moreover, since $\ebf$ is perpendicular to both $\abf$ and $\bbf$, it is then straightforward to prove that $\abf\times\bbf = \sin\varphi \, \ebf$.

We can simply motivate the analogy with rotations and, as a consequence, a geometrical interpretation, by multiplying, in generalized sense $\wedgequat$, both sides of~\eqref{eq:quotientquat} by the quantity $\qtfa$ and taking advantage of~\eqref{eq:inv-def-id}, so that we obtain the relation
\begin{equation}
\qtfb = \qtf \wedgequat \qtfa , 
\end{equation}
and substituting the expression for the quaternion in~\eqref{eq:quaternion-polar}, we get
\begin{equation}\label{eq:rotationQA}
\qtf \wedgequat \qtfa = \abf \cos \varphi + \ebf \times \abf \sin \varphi .
\end{equation}
Looking at~\eqref{eq:rotationQA}, we recognize the equivalent expression for a rotation by the angle $\varphi$ of $\abf$ with respect to $\ebf$. In this specific case, the rotation is confined to a two-dimensional plane which is perpendicular to $\ebf$~\cite{pujol2012Hamilton}.

In general, a rotation through an angle $\varphi$ about a unit vector $\ubf=(u_1, u_2, u_3)$, expressed in the three-dimensional time-like basis, is written by means of the unit quaternion~\cite[Chap.~X]{Brand1947vectorandtensor}
\begin{equation}\label{eq:Qforvarphi}
\qtf = \cos\frac{\varphi}{2} + \ubf\sin\frac{\varphi}{2}.
\end{equation}
Then, we consider a pure quaternion $\qtfr = 0+\mathbf{r}$, generated by the vector $\mathbf{r}$, and we write its transformation as
\begin{equation}\label{eq:Rrot}
\qtfr' = \qtf \wedgequat \qtfr \wedgequat \qtf^{-1} ,
\end{equation}
with $\qtfr' = 0+\mathbf{r}'$ and $\mathbf{r}'$ corresponding to the rotated three-dimensional vector.
To clarify this latter concept, let us investigate an example. We consider~\eqref{eq:Qforvarphi} where $\ubf=\ebf_\iq$ and  $\mathbf{r} = \ebf_\jq$.
As a consequence,~\eqref{eq:Rrot} may be written as follows
\begin{equation}
\mathbf{r}' = \left( \cos\frac{\varphi}{2} + \ebf_\iq\sin\frac{\varphi}{2}\right)\wedgequat \ebf_\jq \wedgequat \left( \cos\frac{\varphi}{2} - \ebf_\iq\sin\frac{\varphi}{2}\right),
\end{equation}
which can be simplified, after a few computations,
\begin{align}
\mathbf{r}' &= \left( \cos\frac{\varphi}{2} + \ebf_\iq\sin\frac{\varphi}{2}\right)\wedgequat \left( \ebf_\jq \cos\frac{\varphi}{2} + \ebf_\kq\sin\frac{\varphi}{2}\right)
\\
&= \ebf_\jq \left(  \cos^2\frac{\varphi}{2} - \sin^2\frac{\varphi}{2}\right) + 2\, \ebf_\kq \sin\frac{\varphi}{2}\cos\frac{\varphi}{2}
\\
&= \ebf_\jq\, \cos\varphi + \ebf_\kq \, \sin\varphi	.	\label{eq:asdfg}
\end{align}
Focus our attention on the final expression in~\eqref{eq:asdfg}, resulting
\begin{equation}
\mathbf{r}'  = \ebf_\jq' =\ebf_\jq\, \cos\varphi + \ebf_\kq \, \sin\varphi ,
\end{equation}
it is remarkable that it corresponds to the vector obtained by rotating, in positive sense, $\ebf_\jq$ about $\ebf_\iq$ through an angle $\varphi$.

\section{Conclusions}\label{sec:concl}

In this paper, we have provided a description of quaternions from the perspective of exterior algebra.
We have seen how the exterior product may be adapted in the three-dimensional time-like reference frame, in order to determine a natural characterization of some mathematical objects, which are perfectly compatible with the standard definition of quaternions.
We have also discussed the application to rotations, in its more basic representation of rotation matrices. Beyond the purposes of this work but concerning rotational problems, some interesting insight could be found considering the Lagrangian approach and the conservation of angular momentum, as in~\cite{martinez2021angularmomentum}.
Working on this topic, an interesting approach from the algebraic point of view, could be inspired by~\cite[Ch. 6]{Vaz2016IntroCliffordAlg}, describing the expressions found for angular momentum and spin in exterior calculus in view of the results obtained.


Another possible extension of the quaternionic multiplications might be done in regards of generalized Maxwell equations in exterior algebra, which have been found following two different approaches in~\cite{colombaro2020generalized} and~\cite{colombaro2021EulerLagrange}.
Considering the results of this article, multivectorial electromagnetism in exterior calculus could  be extended as shown by E. K\"{a}hler in~\cite{kahler1937maxwell}. Moreover, inspired by N. Salingaros~\cite{salingaros1981electromagnetism}, other interesting observations might arise, as a formal equivalence between $r$-vectors and $p$-forms, following the steps as described in~\cite{salingaros1979aadd} and computing the suitable operations.
%
%
%

Furthermore, future perspectives can surely deal with other applications of quaternions, whose investigation touches different and varied topics and appears in a large variety of physical problems.


\section*{Acknowledgements}
The author acknowledge the anonymous reviewer for the constructive suggestions which have helped to improve the manuscript.
Also, the author is deeply grateful to Josep Font-Segura for many helpful comments and discussions.
The work of the author has been carried out in the framework of the activities of the Italian National Group of Mathematical Physics [Gruppo Nazionale per la Fisica Matematica (GNFM), Istituto Nazionale di Alta Matematica (INdAM)]. 
Moreover, the work of the author is partially supported by INdAM-GNFM Young Researchers Project 2023, CUP\_E53C22001930001.

\bibliographystyle{unsrt}	
\bibliography{physics-rdm.bib}

\end{document}